\documentclass[11pt]{article}
\usepackage[a4paper,margin=1in]{geometry}
\usepackage[numbers,square]{natbib}
\usepackage{hyperref}
\usepackage{amsmath,amssymb,amsthm}
\usepackage{tabularx}

\theoremstyle{plain}
\newtheorem{theorem}{Theorem}
\newtheorem{lemma}[theorem]{Lemma}

\theoremstyle{definition}

\makeatletter
\def\@endtheorem{\endtrivlist}
\makeatother

\newcounter{rrule}
\renewcommand{\therrule}{R\arabic{rrule}}
\newenvironment{rrule}{\refstepcounter{rrule}\par\smallskip\noindent
\textbf{(\therrule)}\quad}{}

\newcounter{brule}
\renewcommand{\thebrule}{B\arabic{brule}}
\newenvironment{brule}{\refstepcounter{brule}\par\smallskip\noindent
\textbf{(\thebrule)}\quad}{}

\newcommand{\problem}[4]{%
\par\addvspace{\smallskipamount}
\begingroup
\setlength{\fboxsep}{4pt}
\noindent\fbox{%
\begin{tabularx}{\dimexpr\linewidth-2\fboxsep-2\fboxrule\relax}{@{}l@{\quad}X@{}}
\multicolumn{2}{@{}l@{}}{\textsc{#1} (#2)} \\[0.4ex]
\textbf{Input:} & #3 \\[0.2ex]
\textbf{Question:} & #4
\end{tabularx}%
}\par
\endgroup
\addvspace{\smallskipamount}
}

\newcommand{\edge}[2]{#1#2}
\newcommand{\degb}{\deg^{*}}
\newcommand{\Nb}{N^{*}}

\newcommand{\explode}[2]{#1_{#2}}
\newcommand{\I}[1]{I_{#1}}

\begin{document}
\title{Pathwidth-One Vertex Explosion and Co-Path Problems}
\author{Dekel Tsur\thanks{Stein Faculty of Computer and Information Science,
Ben-Gurion University of the Negev.}}
\date{}
\maketitle

\begin{abstract}
In the \textsc{Bipartite One-sided Vertex Explosion} problem (BOVE),
the input is a bipartite graph $G = (T, B, E)$ and a nonnegative integer $k$.
The goal is to decide whether there is a set $S \subseteq B$ of size at most $k$ such that
exploding every vertex in $S$ results in a graph with pathwidth at most one.
The restricted \textsc{Pathwidth One Vertex Explosion} problem (restricted POVE)
is a generalization of BOVE\@.
The input to this problem is a graph $G$, a nonnegative integer $k$, and a set of vertices $X \subseteq V(G)$.
The goal is to decide whether there is a set $S \subseteq X$ of size at most $k$ such that
exploding every vertex in $S$ results in a graph with pathwidth at most one.
In this paper, we show parameter-preserving reductions in both directions between BOVE and Co-Path Set,
and between restricted POVE and restricted Co-Path Packing.
These reductions yield improved parameterized algorithms for BOVE and restricted POVE
and an improved kernel for BOVE.
\end{abstract}

\paragraph{Keywords} graph algorithms, parameterized complexity,
branching algorithms.

\section{Introduction}

In this paper, we consider the following graph modification problems:

\problem{Bipartite One-sided Vertex Explosion}{BOVE}
{A bipartite graph $G = (T, B, E)$ and a nonnegative integer $k$.}
{Is there a set $S \subseteq B$ of size at most $k$ such that
exploding every vertex in $S$ results in a graph with pathwidth at most one?}

\problem{Pathwidth One Vertex Explosion}{POVE}
{A graph $G$ and a nonnegative integer $k$.}
{Is there a set $S \subseteq V(G)$ of size at most $k$ such that
exploding every vertex in $S$ results in a graph with pathwidth at most one?}

\problem{Co-Path Set}{CPS}
{A graph $G$ and a nonnegative integer $k$.}
{Is there a set $S \subseteq E(G)$ of size at most $k$ such that
deleting the edges in $S$ results in a linear forest?}

\problem{Co-Path Packing}{CPP}
{A graph $G$ and a nonnegative integer $k$.}
{Is there a set $S \subseteq V(G)$ of size at most $k$ such that
deleting the vertices in $S$ results in a linear forest?}

Here, the \emph{vertex explosion} operation removes a vertex $v$ and, for each former neighbor $u$ of $v$,
adds a new vertex adjacent only to $u$.
A graph is a \emph{linear forest} if every connected component is a path.

In a \emph{restricted} variant of each of the problems above, the input also
includes a set $X$ of vertices or edges, and the set $S$ is required
to be a subset of $X$.
For restricted BOVE, we require $X \subseteq B$.

\citet{ahmed2023splitting} introduced the BOVE problem, which has applications in graph visualization,
and showed that BOVE is NP-complete.
\citet{ahmed2023fpt} gave a kernel for BOVE with $O(k^6)$ vertices.
\citet{baumann2024parameterized} considered the restricted POVE problem, which is
a generalization of the BOVE problem, and gave
an $O^*(4^k)$-time algorithm and a kernel with $O(k^2)$ vertices for this problem.
Recently, \citet{liu2026improved} gave an $O^*(2.31^k)$-time algorithm and
a kernel with $10k$ vertices for BOVE.

Parameterized algorithms and kernels for CPS were given
in~\citep{fernau2008parameterized,jiang2010weak,zhang2014radiation,feng2016kernelization,sullivan2017fast,li2017improved,Tsur_copath}.
The fastest parameterized algorithms for CPS 
are a deterministic $O^*(2^k)$-time algorithm given by \citet{Tsur_copath}
and a randomized $O^*(1.588^k)$-time algorithm given by \citet{sullivan2017fast}.
The smallest kernel for CPS is the kernel of \citet{li2017improved}, which has at most $4k$ vertices.

Parameterized algorithms for CPP were given
in~\citep{chen2010linear,feng2015randomized,Tsur_2bd,liu2026solving}.
The fastest parameterized algorithms for CPP 
are a deterministic $O^*(3^k)$-time algorithm given by \citet{Tsur_2bd}
and a randomized $O^*(2.925^k)$-time algorithm given by \citet{liu2026solving}.
In their concluding remarks, \citet{chen2010linear} stated that CPP has a kernel of size $O(k^2)$,
but did not provide a construction or proof.

In this paper, we give parameter-preserving reductions in both directions
between BOVE and CPS, and between POVE and CPP.
The reductions also apply to the restricted versions of these problems.
Consequently, we obtain the following results.
\begin{itemize}
\item
A deterministic $O^*(2^k)$-time algorithm and a randomized $O^*(1.588^k)$-time algorithm
for BOVE.

\item
A kernel for BOVE with at most $9k$ vertices.

\item
A deterministic $O^*(3^k)$-time algorithm for restricted POVE.
\end{itemize}

\section{Preliminaries}


A \emph{parameter-preserving reduction} from a parameterized problem $P$ to
a parameterized problem $Q$
is a polynomial-time algorithm that maps each instance $(I,k)$ of $P$
to an instance $(I',k')$ of $Q$ such that $k' \leq k$ and
$(I,k)$ is a yes-instance of $P$ if and only if $(I',k')$ is a yes-instance of $Q$.

Let $G$ be an undirected graph.
For a vertex $v \in V(G)$, $N_G(v)$ is the set of neighbors of $v$, $N_G[v] = N_G(v) \cup \{v\}$,
and $\deg_G(v) = |N_G(v)|$.
Additionally, $\Nb_G(v) = \{u \in N_G(v) : \deg_G(u) \geq 2 \}$ and
$\degb_G(v) = |\Nb_G(v)|$.
For a set of vertices $S \subseteq V(G)$, $N_G(S) = (\bigcup_{v \in S} N_G(v)) \setminus S$
and $N_G[S] = N_G(S) \cup S$.
For a set of vertices $S$, $G[S]$ is the graph $(S, \{\edge{u}{v} \in E(G) : u,v \in S\})$
and $G-S = G[V(G) \setminus S]$.
For a set of edges $S$, $G-S$ is the graph $(V(G), E(G) \setminus S)$.

A graph $G$ is a \emph{caterpillar forest} if removing every degree-1 vertex of $G$ results in
a linear forest.

A \emph{path decomposition} of a graph $G$
is a sequence of sets of vertices $X_1,\ldots,X_s$ such that
\begin{enumerate}
\item
Every vertex of $G$ belongs to at least one set $X_i$.
\item
For every edge $\edge{u}{v} \in E(G)$, there is at least one set $X_i$ such that $u,v \in X_i$.
\item
For every $i < j$, $X_i \cap X_j \subseteq X_l$ for every $i \leq l \leq j$.
\end{enumerate}
The \emph{width} of a path decomposition is $\max_i |X_i|-1$.
The \emph{pathwidth} of a graph $G$ is the minimum width of a path decomposition of $G$.

\begin{theorem}[\citet{arnborg1990monadic,baumann2024parameterized}]
Let $G$ be an undirected graph.
The following statements are equivalent.
\begin{enumerate}
\item
$G$ has pathwidth at most one.
\item
$G$ is a caterpillar forest.
\item
$G$ is acyclic and $\degb_G(v) \leq 2$ for every $v \in V(G)$.
\end{enumerate}
\end{theorem}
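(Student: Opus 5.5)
I will prove the cycle of implications $1 \Rightarrow 3 \Rightarrow 2 \Rightarrow 1$. Throughout, a caterpillar forest is in particular a forest, since deleting degree-1 vertices cannot destroy a cycle.

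\textbf{($1 \Rightarrow 3$).} Let $X_1,\ldots,X_s$ be a path decomposition of width at most one.

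First, $G$ is acyclic. Suppose $G$ contains a cycle $C$. Pathwidth is monotone under taking subgraphs: restricting every bag to $V(C)$ keeps a valid decomposition. So $C$ would have a decomposition with bags of size at most $2$. In such a decomposition each bag containing two vertices covers exactly one edge. Take a bag covering an edge $uv$ of $C$. The vertex $u$ has another neighbor $w \ne v$ on $C$, and the bags containing $u$ form an interval. Following the structure along this interval, the bags containing $u$ must alternate between covering $uv$ and covering $uw$. The remaining path $C-u$ would then have to lie on one side of that interval. However, $C-u$ connects $v$ and $w$, which occur on different sides, and it avoids $u$. Every edge of $C-u$ lies in a bag not containing $u$, and a walk along $C-u$ from $v$'s side to $w$'s side must pass a bag containing $u$ together with some vertex of $C-u$ other than $v$ or $w$. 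That is impossible with bag size $2$.

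A cleaner way to say this: bags of size $\le 2$ make consecutive bags share at most one vertex. Hence the shared vertices form separators of size $1$, so every cycle lies within a single bag, which is impossible.

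Second, $\degb_G(v)\le 2$ for every $v$. Suppose some $v$ has three neighbors $a,b,c$, each of degree $\ge 2$, with further neighbors $a',b',c'$. By acyclicity these are distinct from each other and from $v$. This is the spider $S_{2,2,2}$, i.e.\ the subdivided claw. Its pathwidth is $2$: the interval of bags containing $v$ has some leg, say the $a$-leg, whose bags lie on neither extreme side. A bag covering $aa'$ must then sit inside the interval of $v$, giving a bag with $v$, $a$ and $a'$ together, or a violation of the size bound. I expect this to be the main obstacle, and I would argue it carefully by the interval (ordering) argument.

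\textbf{($3 \Rightarrow 2$).} Let $H$ be $G$ minus all degree-1 vertices. $H$ is acyclic. Any vertex $v$ remaining in $H$ has $N_H(v)\subseteq \Nb_G(v)$, since a neighbor that survives has degree at least $2$ in $G$. Hence $\deg_H(v)\le 2$. An acyclic graph with maximum degree at most $2$ is a linear forest.

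\textbf{($2 \Rightarrow 1$).} Handle each component separately and concatenate the decompositions. Take a component whose spine (after deleting its degree-1 vertices) is the path $p_1,\ldots,p_m$. Use the bags
\[
\{p_1,\ell\}\ \text{for leaves } \ell \text{ of } p_1,\quad \{p_1,p_2\},\quad \{p_2,\ell\}\ \text{for leaves } \ell \text{ of } p_2,\quad \{p_2,p_3\},\quad \ldots
\]
Every vertex and every edge is covered. Each $p_i$ appears in a contiguous block of bags, and each leaf appears in exactly one bag, so the interval condition holds. All bags have size at most $2$, so the width is at most one.

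The degenerate components are handled directly. An isolated vertex gets the bag $\{v\}$. A single edge gets the bag $\{u,v\}$; this is the case where the spine is empty because both endpoints have degree $1$.
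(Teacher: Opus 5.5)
The paper does not prove this theorem; it quotes it from Arnborg et al.\ and Baumann et al. Your cycle $1 \Rightarrow 3 \Rightarrow 2 \Rightarrow 1$ is therefore a self-contained replacement for a citation, and its overall plan is sound. The direction $3 \Rightarrow 2$ is correct; in fact $N_H(v) = \Nb_G(v)$. The caterpillar layout in $2 \Rightarrow 1$ is also correct. There you should say explicitly that deleting the leaves of a tree with at least three vertices leaves a nonempty tree. Hence each component has a single spine path, and every leaf is adjacent to a spine vertex.

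In $1 \Rightarrow 3$, two steps need repair.

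\emph{Acyclicity.} Discard your first argument. Its final claim is false: a bag $\{u,x\}$ with $x$ a vertex of $C-u$ other than $v,w$ is perfectly legal in a width-one decomposition. The separator version is right but has to be carried out.
\begin{enumerate}
\item First delete repeated consecutive bags; otherwise consecutive bags can share two vertices.
\item For each $i<s$, the connected graph $C-(X_i\cap X_{i+1})$ lies entirely on the left side of this separator (vertices occurring only in $X_1,\ldots,X_i$) or entirely on the right side.
\item Let $i$ be the first index for which it lies on the left, or $i=s$ if there is none. The interval property then gives $V(C)\subseteq X_i$, contradicting $|V(C)|\geq 3$.
\end{enumerate}
A shorter route exists. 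The vertex whose interval ends first has all its neighbors in its last bag, so it has degree at most one. Every subgraph inherits a width-one decomposition, so no subgraph has minimum degree $2$, and hence there is no cycle.

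\emph{The subdivided claw.} Your idea is right, but you omit what makes it work. Every bag in the interval $I_v$ of bags containing $v$ has room for only one of $a,b,c$. So the traces $I_a\cap I_v$, $I_b\cap I_v$, $I_c\cap I_v$ are nonempty, pairwise disjoint intervals, and hence linearly ordered. Suppose $a$'s trace is the middle one. Then $I_a$ cannot meet the $b$-trace or the $c$-trace, since that would give a bag containing $v$, $a$ and $b$ (or $c$). Being an interval, $I_a$ lies strictly between those traces, and hence inside $I_v$. The bag covering $aa'$ then contains $v$, $a$ and $a'$, which is three vertices. With these details supplied, your proof is complete.
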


\section{Reductions}

For a graph $G$ and a set of vertices $S \subseteq V(G)$, let
$\explode{G}{S}$ be the graph obtained by exploding every vertex in $S$.

\begin{lemma}\label{lem:reduction-bove-cps}
There is a parameter-preserving reduction from BOVE to CPS\@.
Additionally, there is a parameter-preserving reduction from restricted BOVE to restricted CPS\@.
\end{lemma}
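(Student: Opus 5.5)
The plan is to translate the caterpillar condition of the earlier equivalence theorem (pathwidth at most one iff acyclic with $\degb \le 2$ everywhere) into a linear-forest condition on an auxiliary graph $H$. In $H$ the B-vertices become edges, so that exploding $b \in B$ corresponds to deleting the edge of $H$ that represents $b$. The reduction is defined for restricted BOVE; unrestricted BOVE is the special case $X = B$. For restricted CPS we take $X'$ to be the set of edges of $H$ that represent vertices of $X$.

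\textbf{Preprocessing.} First I apply polynomial-time rules that leave the answer unchanged and do not increase $k$.
\begin{enumerate}
\item Remove every vertex of $T$ of degree at most one, after recording for each $b \in B$ whether $\deg_G(b) \ge 2$. A degree-one T-vertex is always a leaf of the caterpillar, so it can only affect whether its neighbor $b$ is counted in $\degb$ of its other neighbors.
\item Remove every $b \in B$ with $\deg_G(b) \le 1$. Such a vertex is harmless and never worth exploding.
\end{enumerate}
Let $d(b)$ be the number of neighbors of $b$ that are non-leaf T-vertices.

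\textbf{Construction of $H$.} The vertex set of $H$ is $T' \cup P$, where $T'$ is the set of remaining T-vertices and $P$ is a set of fresh pendant vertices.
\begin{itemize}
\item Each $b$ with $d(b) = 2$ becomes an edge between its two T-neighbors.
\item Each $b$ with $d(b) = 1$ and $\deg_G(b) \ge 2$ becomes an edge from its T-neighbor to a new pendant vertex in $P$. Such a $b$ still contributes to $\degb$ of that neighbor.
\end{itemize}
Then $\degb$ of a T-vertex in $\explode{G}{S}$ equals its degree in $H - S$. A cycle in $\explode{G}{S}$ through T-vertices corresponds to a cycle in $H - S$, and a $b$ with $d(b) = 2$ never violates its own degree bound. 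Hence for this class of B-vertices, $\explode{G}{S}$ is a caterpillar forest iff $H - S$ is a linear forest, which gives both directions with $k' = k$.

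\textbf{Main obstacle.} The difficulty is the B-vertices with $d(b) \ge 3$. If such a $b$ is not exploded, it must become a spine vertex of the caterpillar, with at most two of its T-neighbors remaining non-leaves. Every other T-neighbor $t$ of $b$ must then become a leaf, so all of $t$'s other B-neighbors of degree $\ge 2$ must be exploded.

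My first attempt is to represent such a $b$ as an actual vertex of $H$, adjacent to all its T-neighbors. I would then prove an exchange lemma: there is an optimal edge-deletion set of $H$ whose deleted edges at $b$ can be charged to a single explosion. Concretely, if an optimal CPS solution deletes $r \ge 1$ edges incident to $b$, I would replace it by deleting $b$'s representing structure and argue that the cost stays within budget.

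If this charging fails, I would instead look for a preprocessing rule. For example, one could show that a vertex $b$ with $d(b) \ge 3$ whose T-neighbors all have other degree-$\ge 2$ B-neighbors either must be exploded (decrease $k$) or forces those neighbors to be leaves. After applying such rules exhaustively, only $d(b) \le 2$ would remain, and the clean correspondence above applies.

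Verifying this step, together with the acyclicity correspondence once spine B-vertices are allowed, is where I expect the real work to lie.
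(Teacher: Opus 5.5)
Your handling of the B-vertices with $d(b)\le 2$ is sound and is essentially the paper's construction. Your pendant vertices play the role of the paper's Rule R2, which keeps T-leaves but strips them from B-vertices of degree at least 3. The proof is nevertheless incomplete. You leave the case $d(b)\ge 3$ open, and your analysis of it rests on a false premise.

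Exploding $b'\in B$ replaces each edge $tb'$ by an edge from $t$ to a new degree-1 vertex. So it never changes the degree of any vertex of $T$, and a T-neighbor of $b$ of degree at least 2 can never "become a leaf." Consequently, for every $b\notin S$ we have $\degb_{\explode{G}{S}}(b)=\degb_G(b)=d(b)$, and every $b$ with $d(b)\ge 3$ lies in every solution.

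This gives the missing step, which is the paper's Rule R1. Delete all such vertices and decrease $k$ by their number. Return a no-instance if $k$ becomes negative, or, in the restricted version, if one of them is outside $X$. This keeps $k'\le k$ and reduces everything to your case $d(b)\le 2$.

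You need neither the exchange/charging lemma nor the preprocessing you sketch. Representing such a $b$ as a vertex of $H$ would not even give an equivalence. Deleting some but not all edges at $b$ in $H$ has no counterpart in BOVE, where $b$ is either exploded entirely or kept with all its $d(b)\ge 3$ non-leaf neighbors.

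A smaller omission concerns parallel edges. $H$ can have them: two B-vertices with the same pair of non-leaf T-neighbors, that is, a 4-cycle in $G$. If CPS instances are simple graphs, you must remove them. Every solution deletes at least one edge of each parallel pair, so delete one copy and decrease $k$. In the restricted version, delete a copy lying in $X'$, or return a no-instance if neither copy is in $X'$. This is the paper's Rule R3.
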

\begin{proof}
We first give a reduction between the unrestricted problems.
Given an instance $(G = (T, B, E),k)$ of BOVE, the reduction first
exhaustively applies the following reduction rules,
as formulated by \citet{liu2026improved} based on the rules of
\citet{ahmed2023fpt}.
We additionally remove isolated vertices in $B$, which is trivially safe.

\begin{rrule}\label{rule:B1-B3}
Let $B_1 = \{ v \in B : \deg_G(v) \leq 1 \}$ and
$B_3 = \{ v \in B : \degb_G(v) \geq 3 \}$.
Return $(G-(B_1 \cup B_3), k-|B_3|)$.
\end{rrule}

\begin{rrule}\label{rule:deg1-neighbor}
Let $v \in T$ be a degree-1 vertex whose unique neighbor has degree
at least 3.
Return $(G-\{v\},k)$.
\end{rrule}

After exhaustively applying these rules, denote the resulting instance by
$(G = (T, B, E),k)$.
We have that every vertex in $B$ has degree~2.
Construct a multigraph $G' = (T, E')$, where $E'$ is defined as follows.
For every vertex $v \in B$, add the edge $e_v = \edge{u_1}{u_2}$, where $u_1,u_2$ are the
two neighbors of $v$ in $G$.

We next show that $(G,k)$ is a yes-instance of BOVE if and only if $(G',k)$ is a yes-instance
of CPS\@.
Suppose that $(G,k)$ is a yes-instance of BOVE, and let $S$ be a solution of this instance.
Let $S' = \{e_v : v \in S\}$.
We claim that $G'-S'$ is a linear forest.
Suppose for contradiction that $G'-S'$ contains a cycle $u_1,u_2,\ldots,u_p,u_1$,
and let $e_{v_1},e_{v_2},\ldots,e_{v_p}$ be the edges of the cycle, where
$e_{v_i}$ is an edge between $u_i$ and $u_{i+1}$, with $u_{p+1} = u_1$.
Then $u_1,v_1,u_2,v_2,\ldots,u_p,v_p,u_1$ is a cycle in $\explode{G}{S}$, a contradiction.
Therefore, $G'-S'$ is acyclic.

Suppose for contradiction that $G'-S'$ contains a vertex $u$ with degree at least 3.
Let $e_{v_1},e_{v_2},e_{v_3}$ be three distinct edges incident to $u$ in $G'-S'$.
Then $v_1,v_2,v_3 \in \Nb_{\explode{G}{S}}(u)$, so
$\degb_{\explode{G}{S}}(u) \geq 3$, a contradiction.
Therefore, $G'-S'$ is a linear forest.
Thus, $S'$ is a solution for $(G',k)$ and $(G',k)$ is a yes-instance.

For the opposite direction, suppose that $(G',k)$ is a yes-instance and let $S'$
be a solution of this instance.
Let $S = \{v \in B : e_v \in S'\}$.
Create a graph $G_0$ from $G'-S'$ by subdividing each edge $e_v$ and naming the new vertex $v$.
Since $G'-S'$ is a linear forest, the graph $G_0$ is also a linear forest.
Let $S^*$ be the set of vertices that are created by exploding the vertices of $S$.
Then, $\explode{G}{S} - S^* = G_0$.
Therefore, $\explode{G}{S}$ is a caterpillar forest.
Thus, $S$ is a solution for $(G,k)$ and $(G,k)$ is a yes-instance.

The final step of the reduction is to exhaustively apply the following reduction rule
in order to remove parallel edges in $G'$.

\begin{rrule}\label{rule:parallel}
If $e_1$ and $e_2$ are parallel edges, return $(G' - \{e_1\},k-1)$.
\end{rrule}

Rule~\ref{rule:parallel} is safe because every solution must delete at least one of $e_1,e_2$.
At any stage of the reduction, if $k < 0$, return a no-instance.
After applying this rule, the reduction returns the current instance $(G',k)$.
This instance is equivalent to the input instance of BOVE\@.

For the restricted problems, we use the same construction with the following modifications.
Let $(G = (T, B, E),X,k)$ be the input instance of restricted BOVE.
Before each application of Rule~\ref{rule:B1-B3}, if $B_3 \nsubseteq X$, return a no-instance
of restricted CPS, since every vertex in $B_3$ must be exploded.
Otherwise, apply the rule and remove the deleted vertices from $X$.
Rule~\ref{rule:deg1-neighbor} is unchanged.
After exhaustively applying these rules, denote the resulting instance by
$(G = (T, B, E),X,k)$.
Construct $G'$ as above and let $X' = \{e_v : v \in X\}$.
The correspondence between $S$ and $S'$ above preserves the restrictions on the allowed sets.

Modify Rule~\ref{rule:parallel} as follows.
If neither of the parallel edges $e_1,e_2$ belongs to $X'$, return a no-instance of restricted CPS\@.
Otherwise, relabel the edges so that $e_1 \in X'$ and return
$(G' - \{e_1\},X' \setminus \{e_1\},k-1)$.
After exhaustively applying the modified Rule~\ref{rule:parallel}, return the current instance
$(G',X',k)$.
\end{proof}

\begin{lemma}\label{lem:reduction-cps-bove}
There is a parameter-preserving reduction from CPS to BOVE\@.
Additionally, there is a parameter-preserving reduction from restricted CPS to restricted BOVE\@.
\end{lemma}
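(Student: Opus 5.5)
The reduction is the inverse of the construction in the proof of Lemma~\ref{lem:reduction-bove-cps}: subdivide every edge. Given an instance $(G',k)$ of CPS, let $T = V(G')$. For every edge $e = \edge{u_1}{u_2} \in E(G')$ create a new vertex $b_e$ adjacent to exactly $u_1$ and $u_2$, and let $B = \{b_e : e \in E(G')\}$. The resulting graph $G = (T,B,E)$ is bipartite and every vertex of $B$ has degree exactly~2. Output $(G,k)$, so $k' = k$ and the construction is clearly polynomial. For the restricted version, given $(G',X',k)$, output $(G,X,k)$ with $X = \{b_e : e \in X'\} \subseteq B$. The correspondence $S' \leftrightarrow S = \{b_e : e \in S'\}$ is a bijection between subsets of $E(G')$ and subsets of $B$. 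It maps $X'$ onto $X$ and preserves sizes, so it suffices to show that $G'-S'$ is a linear forest if and only if $\explode{G}{S}$ has pathwidth at most one.

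The plan is to use the characterization from the theorem of \citet{arnborg1990monadic,baumann2024parameterized}: a graph has pathwidth at most one if and only if it is acyclic and $\degb(v) \le 2$ for every vertex $v$. Let $H$ be the graph obtained from $G'-S'$ by subdividing every remaining edge $e$ with the vertex $b_e$. The key structural observation is that $\explode{G}{S}$ equals $H$ together with pendant degree-1 vertices. These pendants are the copies created by exploding each $b_e$ with $e \in S'$: one attached to each endpoint of $e$.

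Two consequences follow.

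\emph{Cycles.} Pendant vertices lie on no cycle, so $\explode{G}{S}$ is acyclic if and only if $H$ is acyclic. Subdivision preserves acyclicity, so this holds if and only if $G'-S'$ is acyclic.

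\emph{Degree counts.} For $u \in T$, the neighbours of $u$ of degree at least~2 in $\explode{G}{S}$ are exactly the vertices $b_e$ with $e$ a surviving edge incident to $u$. Each such $b_e$ has degree~2, because its endpoints lie in $T$ and are never exploded, whereas every pendant copy has degree~1. Hence $\degb_{\explode{G}{S}}(u) = \deg_{G'-S'}(u)$. For a surviving vertex $b_e \in B$, $\degb(b_e) \le \deg(b_e) = 2$ automatically, and the new pendant vertices have degree~1.

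Therefore the condition ``$\degb \le 2$ everywhere'' is equivalent to $G'-S'$ having maximum degree at most~2. Combined with acyclicity, this is exactly the condition that $G'-S'$ is a linear forest.

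The only delicate step, which I will check carefully, is the degree accounting via $\degb$ rather than $\deg$. It must be verified that the pendants created by explosions are never counted, which holds because they have degree~1. It must also be verified that every surviving subdivision vertex is counted, which holds because $T$-vertices are never exploded, so each surviving $b_e$ keeps both neighbours. If $G'$ is allowed to have parallel edges, the same subdivision still yields a simple bipartite graph, and a pair of parallel edges becomes a 4-cycle in $G$, so the argument is unchanged. The restricted case needs no extra work beyond noting that $S' \subseteq X'$ if and only if $S \subseteq X$.
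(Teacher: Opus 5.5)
Your proof is correct and follows essentially the paper's approach. You subdivide every edge, map $S'$ to $\{b_e : e \in S'\}$, set $X = \{b_e : e \in X'\}$, and rely on the acyclic-plus-$\degb$ characterization; the only minor difference is that the paper proves the forward direction by noting that deleting the explosion-created vertices leaves a subdivided linear forest (so $\explode{G}{S}$ is a caterpillar forest), whereas you get both directions at once from the identity $\degb_{\explode{G}{S}}(u) = \deg_{G'-S'}(u)$ for $u \in T$.
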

\begin{proof}
We first give a reduction between the unrestricted problems.
Given an instance $(G,k)$ of CPS, the reduction returns $(G',k)$,
where $G' = (T, B, E')$ is a bipartite graph obtained from $G$ by subdividing each edge of $G$,
$T = V(G)$, and $B = V(G') \setminus T$.
The vertex obtained by subdividing an edge $e$ is denoted $b_e$.

Suppose that $(G,k)$ is a yes-instance of CPS, and let $S$ be a solution of this instance.
Let $S' = \{b_e : e \in S\}$.
Let $G_0$ be a graph obtained from $G-S$ by subdividing each edge of $G-S$.
Since $G-S$ is a linear forest, the graph $G_0$ is a linear forest.
Let $S^*$ be the set of vertices that are created by exploding the vertices of $S'$.
Then, $\explode{G'}{S'} - S^* = G_0$.
Hence, $\explode{G'}{S'}$ is a caterpillar forest.
Therefore, $S'$ is a solution of $(G',k)$ and $(G',k)$ is a yes-instance.

For the opposite direction, suppose that $(G',k)$ is a yes-instance,
and let $S'$ be a solution of this instance.
Let $S = \{e \in E(G) : b_e \in S'\}$.	
We claim that $G-S$ is a linear forest.
Suppose for contradiction that $G-S$ contains a cycle $v_1,v_2,\ldots,v_p,v_1$.
Then, $v_1,b_{\edge{v_1}{v_2}},v_2,b_{\edge{v_2}{v_3}},\ldots,v_p,b_{\edge{v_p}{v_1}},v_1$
is a cycle in $\explode{G'}{S'}$, a contradiction.
Therefore, $G-S$ is acyclic.

Suppose for contradiction that $G-S$ contains a vertex $v$ with degree at least 3.
Let $e_1,e_2,e_3$ be three distinct edges incident to $v$ in $G-S$.
Then $b_{e_1},b_{e_2},b_{e_3} \in \Nb_{\explode{G'}{S'}}(v)$, so
$\degb_{\explode{G'}{S'}}(v) \geq 3$, a contradiction.
Therefore, $G-S$ is a linear forest.
Thus, $S$ is a solution for $(G,k)$ and $(G,k)$ is a yes-instance.

The reduction between the restricted problems is as follows.
Given an instance $(G,X,k)$ of restricted CPS, the reduction returns $(G',X',k)$, where
$G'$ is the graph defined above and $X' = \{b_e: e \in X\}$.
\end{proof}

For a graph $G$, let $\I{G} = \{v \in V(G) : \deg_G(v) \geq 2 \}$.
Since exploding a vertex does not modify the degrees of the other vertices, we have
that every vertex in $V(G) \setminus S$ has the same degree in $G$ and $\explode{G}{S}$.
Additionally, every vertex created by the explosions has degree one in $\explode{G}{S}$.
Therefore, $\I{\explode{G}{S}} = \I{G} \setminus S$.
Additionally, $\explode{G}{S}[\I{\explode{G}{S}}] = G[\I{G}] - S$.

\begin{lemma}\label{lem:reduction-pove-cpp}
There is a parameter-preserving reduction from POVE to CPP\@.
Additionally, there is a parameter-preserving reduction from restricted POVE to restricted CPP\@.
\end{lemma}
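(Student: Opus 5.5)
The plan is to use the observation just before the statement, that $\I{\explode{G}{S}} = \I{G} \setminus S$ and $\explode{G}{S}[\I{\explode{G}{S}}] = G[\I{G}] - S$. Given an instance $(G,k)$ of POVE, the reduction outputs the CPP instance $(G[\I{G}],k)$. This is clearly polynomial and keeps the parameter unchanged. For the restricted versions, given $(G,X,k)$ it outputs $(G[\I{G}], X \cap \I{G}, k)$.

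The first key step is to show that a graph $H$ has pathwidth at most one iff $H[\I{H}]$ is a linear forest. By the theorem of \citet{arnborg1990monadic,baumann2024parameterized}, pathwidth at most one is equivalent to $H$ being a caterpillar forest, that is, to $H$ minus its degree-1 vertices being a linear forest. The vertices outside $\I{H}$ are exactly the degree-1 vertices and the isolated vertices. Adding or removing isolated vertices does not affect being a linear forest, since an isolated vertex is a trivial path. So $H$ minus its degree-1 vertices is a linear forest iff $H[\I{H}]$ is. Applying this with $H = \explode{G}{S}$ and the identity above gives, for every $S \subseteq V(G)$:
\[
\explode{G}{S} \text{ has pathwidth at most one} \iff G[\I{G}] - (S \cap \I{G}) \text{ is a linear forest}.
\]
Here I use $G[\I{G}] - S = G[\I{G}] - (S\cap \I{G})$.

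The second step is the correctness of the reduction. If $S$ is a POVE solution of size at most $k$, then $S \cap \I{G}$ is a CPP solution for $G[\I{G}]$ of size at most $k$. Conversely, if $S' \subseteq \I{G}$ is a CPP solution of size at most $k$, then taking $S = S'$ in the displayed equivalence shows that $\explode{G}{S'}$ has pathwidth at most one. For the restricted versions the same argument works. Intersecting a solution $S \subseteq X$ with $\I{G}$ gives a subset of $X \cap \I{G}$, and a solution $S' \subseteq X \cap \I{G}$ is in particular a subset of $X$.

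I expect the main obstacle to be only the bookkeeping in the first step. One must check that isolated vertices of $\explode{G}{S}$ cause no trouble, since they are neither removed by the caterpillar condition nor contained in $\I{\cdot}$. One must also check that the vertices created by explosions all have degree exactly one, so they lie outside $\I{\explode{G}{S}}$. The paper's preceding remark already establishes both facts, so the rest is direct.
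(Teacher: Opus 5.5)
Your proposal is correct and follows the paper's proof. It uses the same reduction $(G[\I{G}],k)$, with $X \cap \I{G}$ in the restricted case, justified by the identity $\explode{G}{S}[\I{\explode{G}{S}}] = G[\I{G}] - S$ and the caterpillar-forest characterization; the only differences are cosmetic (packaging both directions into the single equivalence ``$H$ has pathwidth at most one iff $H[\I{H}]$ is a linear forest'', and intersecting $S$ with $\I{G}$ rather than assuming $S \subseteq \I{G}$ without loss of generality), which makes the forward direction slightly cleaner.
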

\begin{proof}
We first give a reduction between the unrestricted problems.
Given an instance $(G,k)$ of POVE, the reduction returns $(G[\I{G}], k)$.

Suppose that $(G,k)$ is a yes-instance of POVE, and let $S$ be a solution of this instance.
Without loss of generality, $S \subseteq \I{G}$, since vertices of degree
at most one can be removed from $S$ without affecting the solution.
Since $\explode{G}{S}$ is a caterpillar forest,
$\explode{G}{S}[\I{\explode{G}{S}}] = \explode{G}{S}[\I{G} \setminus S]$ is a linear forest.
Therefore, $S$ is a solution for the CPP instance $(G[\I{G}], k)$
and this instance is a yes-instance.

For the opposite direction, suppose that $(G[\I{G}], k)$ is a yes-instance,
and let $S$ be a solution of this instance.
The graph $\explode{G}{S}$ is obtained from the graph $\explode{G}{S}[\I{\explode{G}{S}}] = G[\I{G}] - S$
by adding isolated vertices, isolated edges, and degree-1 vertices that are adjacent to vertices of
$G[\I{G}] - S$.
Since $G[\I{G}] - S$ is a linear forest, $\explode{G}{S}$ is a caterpillar forest.
Therefore, $S$ is a solution for the POVE instance $(G,k)$, and $(G,k)$ is a yes-instance.

The reduction between the restricted problems is as follows.
Given an instance $(G,X,k)$ of restricted POVE, the reduction returns $(G[\I{G}],X \cap \I{G}, k)$.
\end{proof}

\begin{lemma}\label{lem:reduction-cpp-pove}
There is a parameter-preserving reduction from CPP to POVE\@.
Additionally, there is a parameter-preserving reduction from restricted CPP to restricted POVE\@.
\end{lemma}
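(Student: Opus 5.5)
The plan is to invert the reduction of Lemma~\ref{lem:reduction-pove-cpp}. Given an instance $(G,k)$ of CPP, I would construct $G'$ from $G$ by attaching two new pendant vertices to every vertex $v \in V(G)$. The reduction returns $(G',k)$. For the restricted version, given $(G,X,k)$, it returns $(G',X,k)$.

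\textbf{Step 1.} The key structural fact is that $\I{G'} = V(G)$ and $G'[\I{G'}] = G$.
\begin{itemize}
\item Every original vertex gains two neighbors, so it has degree at least $2$ in $G'$. This is true even for vertices that were isolated or of degree one in $G$, which is why one pendant per vertex would not suffice and two are needed.
\item Every new vertex has degree exactly $1$, so it is not in $\I{G'}$.
\item The only edges of $G'$ between original vertices are the edges of $G$, so $G'[\I{G'}] = G$.
\end{itemize}

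\textbf{Step 2.} I would then use the equivalence established in the proof of Lemma~\ref{lem:reduction-pove-cpp}. That proof shows, for any graph $H$, that $(H,k)$ is a yes-instance of POVE if and only if $(H[\I{H}],k)$ is a yes-instance of CPP. Applying it with $H = G'$ gives that $(G',k)$ is a yes-instance of POVE if and only if $(G,k)$ is a yes-instance of CPP. This is exactly the required equivalence, and $k' = k$, so the reduction is parameter-preserving. The construction is clearly polynomial.

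For a self-contained argument, I would repeat the two directions briefly.
\begin{itemize}
\item A POVE solution $S$ may be assumed to satisfy $S \subseteq \I{G'}$, since exploding a degree-$\le 1$ vertex does not change the graph up to isomorphism. Because $\explode{G'}{S}$ is a caterpillar forest, deleting its degree-$1$ vertices gives $\explode{G'}{S}[\I{G'}\setminus S] = G - S$, and this is a linear forest.
\item Conversely, if $S$ is a solution for the CPP instance $(G,k)$, then $\explode{G'}{S}$ is obtained from the linear forest $G - S$ by adding isolated edges and pendant vertices. Hence it is a caterpillar forest.
\end{itemize}

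\textbf{Step 3.} For the restricted variant, I set $X' = X$. Since $X \subseteq V(G) = \I{G'}$, the restricted version of Lemma~\ref{lem:reduction-pove-cpp} maps $(G',X,k)$ to $(G'[\I{G'}], X \cap \I{G'}, k) = (G,X,k)$. The same equivalence then applies. The only subtle point is the WLOG step that removes degree-$\le 1$ vertices from $S$. That step only shrinks $S$, so it preserves $S \subseteq X$.

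I expect no real obstacle here. The one step to be careful about is choosing the padding so that $\I{G'}$ is exactly $V(G)$, which means no original vertex is left with degree below $2$.
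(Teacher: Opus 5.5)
Your proposal is correct and follows the paper's proof. It uses the same construction (two private pendant neighbors per vertex, with $X' = X$ in the restricted case), the same key fact $\I{G'} = V(G)$ with $G'[\I{G'}] = G$, and the same two directions, including the step that assumes $S \subseteq \I{G'}$ without loss of generality. Routing the equivalence through the proof of Lemma~\ref{lem:reduction-pove-cpp} applied to $H = G'$ is just a more compact packaging of the argument that the paper writes out directly.
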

\begin{proof}
We first give a reduction between the unrestricted problems.
Given an instance $(G,k)$ of CPP, the reduction returns $(G',k)$,
where $G'$ is a graph obtained from $G$ by adding two private degree-1 neighbors $v_1,v_2$ for
every vertex $v \in V(G)$.
Note that $\I{G'} = V(G)$.

Suppose that $(G,k)$ is a yes-instance of CPP, and let $S$ be a solution of this instance.
The graph $\explode{G'}{S}$ is the graph obtained from $G-S$ by adding degree-1 vertices that
are adjacent to vertices in $G-S$ and adding paths of length one.
Since $G-S$ is a linear forest, $\explode{G'}{S}$ is a caterpillar forest.
Therefore, $S$ is a solution for the POVE instance $(G',k)$ and $(G',k)$ is a yes-instance.

For the opposite direction, suppose that $(G',k)$ is a yes-instance,
and let $S$ be a solution of this instance.
Without loss of generality, $S \subseteq \I{G'} = V(G)$, since vertices of degree
at most one can be removed from $S$ without affecting the solution.
We have $\I{\explode{G'}{S}} = \I{G'} \setminus S = V(G) \setminus S$ and
$\explode{G'}{S}[\I{\explode{G'}{S}}] = G-S$.
Since $\explode{G'}{S}$ is a caterpillar forest, $\explode{G'}{S}[\I{\explode{G'}{S}}] = G-S$ is a linear forest.
Therefore, $S$ is a solution for the CPP instance $(G,k)$ and $(G,k)$ is a yes-instance.

The reduction between the restricted problems is as follows.
Given an instance $(G,X,k)$ of restricted CPP, the reduction returns $(G',X,k)$, where
$G'$ is the graph defined above.
\end{proof}

\section{Parameterized algorithms and a kernel for BOVE}

From Lemma~\ref{lem:reduction-bove-cps} and the algorithms of \citet{Tsur_copath} and \citet{sullivan2017fast},
we obtain the following theorem.
\begin{theorem}
There is a deterministic $O^*(2^k)$-time algorithm and a randomized $O^*(1.588^k)$-time algorithm
for BOVE.
\end{theorem}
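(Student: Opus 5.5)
The plan is to compose the parameter-preserving reduction of Lemma~\ref{lem:reduction-bove-cps} with the known CPS algorithms. Given an instance $(G,k)$ of BOVE, run the reduction to get, in polynomial time, an equivalent instance $(G',k')$ of CPS with $k' \leq k$; if the reduction already outputs a no-instance (because $k$ went negative), answer no. Then run either the deterministic $O^*(2^{k'})$-time algorithm of \citet{Tsur_copath} or the randomized $O^*(1.588^{k'})$-time algorithm of \citet{sullivan2017fast} on $(G',k')$ and return its answer. Since $k' \leq k$, the running times are $O^*(2^k)$ and $O^*(1.588^k)$. Equivalence of the two instances gives correctness, and the randomized algorithm keeps its one-sided error guarantee.

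The step that needs real care is that $G'$ is first built as a multigraph, whereas the cited CPS algorithms take simple graphs. The reduction handles this by exhaustively applying Rule~\ref{rule:parallel}, each time decreasing the parameter by one. After that the returned graph has no parallel edges. It also has no loops, because each vertex of $B$ had two distinct neighbors in $T$, since $G$ is a simple bipartite graph. So $G'$ is simple and the algorithms apply directly.

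Two smaller points should be stated explicitly. First, the polynomial size of $G'$: $|V(G')| \leq |V(G)|$ and $|E(G')| \leq |B|$, so the polynomial factors hidden in $O^*$ are unaffected. Second, the parameter only decreases along the reduction, so $k' \leq k$ holds. Nothing beyond Lemma~\ref{lem:reduction-bove-cps} and the cited algorithms is needed.
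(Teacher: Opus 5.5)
Your proposal is correct and matches the paper's argument: compose the parameter-preserving reduction of Lemma~\ref{lem:reduction-bove-cps} with the CPS algorithms of \citet{Tsur_copath} and \citet{sullivan2017fast}, using $k' \leq k$. Your extra checks are accurate and are details the paper leaves implicit. After Rule~\ref{rule:parallel} the graph $G'$ has no parallel edges, and it has no loops because each remaining vertex of $B$ has exactly two distinct neighbors.
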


From Lemmas~\ref{lem:reduction-bove-cps} and~\ref{lem:reduction-cps-bove} and the kernel
of \citet{li2017improved}, we obtain the following theorem.
\begin{theorem}
There is a kernel for BOVE with at most $9k$ vertices.
\end{theorem}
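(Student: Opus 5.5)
The plan is to chain the two reductions through the CPS kernel of \citet{li2017improved}. Given an instance $(G=(T,B,E),k)$ of BOVE, I first apply the reduction of Lemma~\ref{lem:reduction-bove-cps} to get an equivalent CPS instance $(G',k')$ with $k'\le k$. By construction $G'$ is a simple graph, since Rule~\ref{rule:parallel} has removed all parallel edges, so the kernel of \citet{li2017improved} applies to it. That kernel yields an equivalent CPS instance $(G'',k'')$ with $k''\le k'$ and $|V(G'')|\le 4k''$. I then apply the reduction of Lemma~\ref{lem:reduction-cps-bove}, which subdivides every edge, to get a BOVE instance $(G''',k'')$. Each step is polynomial-time and preserves the answer, so the composition is a kernelization algorithm. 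What remains is the size bound on $G'''$.

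We have $|V(G''')| = |V(G'')| + |E(G'')|$. The vertex count is at most $4k''$, so the real task is bounding $|E(G'')|$ linearly, which I expect to be the main obstacle. Deleting at most $k''$ edges from $G''$ leaves a linear forest on $|V(G'')|$ vertices, and such a forest has at most $|V(G'')|-1$ edges. Hence $|E(G'')|\le |V(G'')|-1+k''$ for any yes-instance. Combining, $|V(G''')|\le 2|V(G'')|+k''-1\le 9k''-1\le 9k$.

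If $|E(G'')|>|V(G'')|-1+k''$, the instance is a no-instance. In that case the algorithm outputs a trivial constant-size no-instance of BOVE instead, for example a single $B$-vertex with three $T$-neighbours each having a further $B$-neighbour, with $k=0$. The usual edge cases are handled directly: if $k''<0$ or any step reports a no-instance, we output this trivial no-instance; if $k''=0$, the bound still holds, or we decide the instance in polynomial time and output a trivial instance. The same chain works for the restricted variants, but the theorem only concerns BOVE.
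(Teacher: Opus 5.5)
Your proposal is correct and follows the paper's proof: reduce BOVE to (simple) CPS via Lemma~\ref{lem:reduction-bove-cps}, apply the $4k$-vertex kernel of \citet{li2017improved}, and output a fixed no-instance when the edge count exceeds what a yes-instance allows. Otherwise you subdivide via Lemma~\ref{lem:reduction-cps-bove}, so that $|V(G''')| = |V(G'')| + |E(G'')| \le 9k'' \le 9k$. The only difference is cosmetic: the paper uses $|E(G''-S)| \le |V(G''-S)|$ (threshold $5k''$), which avoids the empty-graph case where your $|V(G'')|-1$ bound fails; your separate handling of $k''=0$ covers that case anyway.
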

\begin{proof}
Given an instance $(G,k)$ of BOVE, the kernel first applies the reduction of Lemma~\ref{lem:reduction-bove-cps}
to construct an instance $(G',k')$ of CPS, with $k' \leq k$.
Then, it applies the kernel of \citet{li2017improved} to $(G',k')$ to obtain an equivalent
instance $(G'',k'')$ such that $|V(G'')| \leq 4k''$ and $k'' \leq k'$.
If $(G'',k'')$ is a yes-instance and $S$ is a solution of $(G'',k'')$, then
$|E(G'')| = |S| + |E(G''-S)| \leq k'' + |V(G''-S)| \leq 5k''$.
Thus, if $|E(G'')| > 5k''$, the kernel returns some fixed no-instance.
Otherwise, the kernel applies the reduction of Lemma~\ref{lem:reduction-cps-bove} to $(G'',k'')$
to obtain an instance $(G''',k'')$ of BOVE.
By the proof of Lemma~\ref{lem:reduction-cps-bove},
$|V(G''')| = |V(G'')| + |E(G'')| \leq 9k'' \leq 9k$.
\end{proof}

\section{Parameterized algorithms for restricted POVE}

\begin{lemma}\label{lem:restricted-cpp-alg}
There is an $O^*(3^k)$-time algorithm for restricted CPP.
\end{lemma}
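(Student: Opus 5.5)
The plan is to reduce restricted CPP to an instance of unrestricted CPP, or to something equally easy, and then invoke the deterministic $O^*(3^k)$-time algorithm of \citet{Tsur_2bd}. The cleanest approach is a gadget that makes the vertices outside $X$ undeletable without increasing the parameter.

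A first attempt would replace each vertex $v \notin X$ by many copies with the same neighborhood, so that deleting $v$ becomes too expensive. This fails, however: true twins of degree at least $2$ immediately create cycles or degree-$3$ vertices, so the instance changes in an uncontrolled way.

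I will therefore not attempt a black-box reduction. Instead, I will adapt the branching algorithm for CPP directly. A CPP branching algorithm only ever branches on small vertex sets that must intersect every solution: a claw, i.e.\ a vertex with three neighbors ($4$ vertices), or a short cycle. In the restricted setting, we intersect each such set with $X$, and the branching factor only decreases. The obstruction structure is as follows.
\begin{itemize}
\item If $G$ has a vertex $v$ of degree at least $3$, choose $v$ together with three of its neighbors. Some vertex of this set must be deleted, so we branch on the at most $4$ members that lie in $X$.
\item If $G$ has no vertex of degree at least $3$, then $G$ has maximum degree $2$, and every component is a path or a cycle. A cycle needs one deleted vertex from $X$, and such a cycle can be handled in polynomial time: if it contains a vertex of $X$, delete one; otherwise return no.
\end{itemize}

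Naive $4$-way branching gives only $O^*(4^k)$. To reach $3^k$, I follow the measure and case analysis of \citet{Tsur_2bd}. When a vertex $v$ of degree at least $3$ lies in $X$, we branch on either deleting $v$ or keeping $v$. If $v$ is kept, at most two of its neighbors survive, so we branch on which pair of neighbors survives, or use the refined rules of that paper. The key observation is that every branching rule and reduction rule there remains valid when the candidate deletions are restricted to $X$. Removing candidates from a branching vector $(1,1,\dots)$ only lowers the branching number. Reduction rules that force the deletion of a vertex $u \notin X$ instead return a no-instance. Rules that argue a vertex ``can be replaced by a better one'' need care, because the better vertex might not lie in $X$.

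\textbf{Main obstacle.} The main obstacle is this last point: exchange-argument reduction rules, such as ``delete the high-degree vertex instead of its degree-$2$ neighbor'', break down when the replacement vertex is outside $X$. I would first try to verify each rule of \citet{Tsur_2bd} in the restricted setting. For any rule whose exchange fails, I would fall back to plain branching, which should still give branching vectors no worse than $(1,1,1)$.
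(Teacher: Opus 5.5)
\textbf{Gap: the exchange-argument problem is identified but not resolved.} Your route is the paper's route: run the branching algorithm of \citet{Tsur_2bd}, keep only branches inside $X$, and check the exchange arguments. You correctly isolate the real difficulty, namely that an exchange may swap in a vertex outside $X$. But the proposal leaves exactly this point open, and the fallback you propose does not support the $3^k$ bound. The claim ``fall back to plain branching, which should still give branching vectors no worse than $(1,1,1)$'' is false in general. Plain branching on the claw $N_G[v]$ of a degree-3 vertex $v$ with $N_G[v]\subseteq X$ has vector $(1,1,1,1)$ and branching number 4. So switching an entire rule to plain branching because its exchange can fail on some instances loses the $3^k$ bound on the instances where the claw lies inside $X$.

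\textbf{The missing idea: make the fallback depend on the instance and apply it first.} The paper adds two rules before Tsur's rules:
\begin{itemize}
\item If some degree-3 vertex $v$ has $N_G[v]\nsubseteq X$, branch on each $u\in N_G[v]\cap X$, or answer no if this set is empty.
\item If some 4-cycle $C$ has $C\nsubseteq X$, branch on each $u\in C\cap X$, or answer no if this set is empty.
\end{itemize}
In each rule one of the four vertices lies outside $X$, so there are at most three branches of size one and the branching number is at most 3. The first rule is exactly your naive claw branching, which you dismissed as giving $4^k$; it is only bad when the whole claw lies in $X$.

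Once these rules no longer apply, every closed neighbourhood of a degree-3 vertex and every 4-cycle lies in $X$. Hence Tsur's Rules (3)--(10) branch only on subsets $B\subseteq X$. Their exchange arguments always insert a vertex $v\in B$, so the modified solution $S\setminus\{u\}\cup\{v\}$ stays inside $X$ automatically. Rules (1) and (2) are handled as you suggest, by discarding branches $B_i\nsubseteq X$.

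\textbf{Terminal case.} You also need to handle the terminal instances of that algorithm. Besides paths and cycles, these include two constant-size trees with degree-3 vertices, so your maximum-degree-2 base case does not cover them. Each such tree is solved by enumerating its allowed subsets.
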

\begin{proof}
We use a recursive branching algorithm.
For an instance $(G,X,k)$, branching on a set $B \subseteq X$ means recursively solving
the instance $(G-B,X \setminus B,k-|B|)$.
If $k < 0$, the algorithm returns ``no''.
At each recursive call, the algorithm applies the first applicable rule.
The first two rules are Rules~\ref{rule:deg-3} and~\ref{rule:C4}, given below.
The remaining rules are the branching rules of \citet{Tsur_2bd}, in their
original order, with the modifications described below.
\begin{brule}\label{rule:deg-3}
Let $v$ be a degree-3 vertex such that $N_G[v] \nsubseteq X$.
If $N_G[v] \cap X \neq \emptyset$, branch on $\{u\}$ for every $u \in N_G[v] \cap X$.
Otherwise, return ``no''.
\end{brule}

\begin{brule}\label{rule:C4}
Let $C$ be the vertex set of a 4-cycle in $G$ such that $C \nsubseteq X$.
If $C \cap X \neq \emptyset$, branch on $\{u\}$ for every $u \in C \cap X$.
Otherwise, return ``no''.
\end{brule}

It is easy to verify that Rules~\ref{rule:deg-3} and~\ref{rule:C4} are correct.
The worst case of Rule~\ref{rule:deg-3} is when $|N_G[v] \cap X| = 3$.
In this case, the branching vector is $(1,1,1)$ and the branching number is 3.
The worst case of Rule~\ref{rule:C4} is when $|C \cap X| = 3$.
In this case, the branching vector is $(1,1,1)$ and the branching number is 3.

The branching rules of \citet{Tsur_2bd} are modified as follows:
In Rule~(1), the algorithm branches on some sets $B_1,\ldots,B_s$.
The modified rule branches on every $B_i$ that satisfies $B_i \subseteq X$.
The same modification is applied to Rule~(2).
We also correct two typographical errors in Rules~(6) and~(7) of \citet{Tsur_2bd}.
In Rule~(6), the neighbors of the degree-3 vertex must have degree at most two,
and in Rule~(7), the unique vertex in $N(v_1) \setminus \{v\}$ must have degree at most two.
Rules~(3)--(10) are otherwise unchanged.
Due to Rules~\ref{rule:deg-3} and~\ref{rule:C4}, each of these rules
branches only on subsets of $X$.
Although the correctness proofs of these rules in \citet{Tsur_2bd}
assume a solution of minimum cardinality,
their branch-containment and exchange arguments apply to arbitrary solutions.
Specifically, for each rule and every solution $S$ of a CPP instance $(G,k)$,
there is a branch $B$ such that either
\begin{enumerate}
\item
$B \subseteq S$, or
\item
there is a vertex $u \notin B$ and a vertex $v \in B$ such that
$S' = S \setminus \{u\} \cup \{v\}$ is a solution of $(G,k)$ and $B \subseteq S'$.
\end{enumerate}
Thus, if $S$ is a solution of an instance $(G,X,k)$ of restricted CPP, then there is
a branch $B$ of the rule such that either
\begin{enumerate}
\item
$B \subseteq S$, or
\item
there is a vertex $u \notin B$ and a vertex $v \in B$ such that
$S' = S \setminus \{u\} \cup \{v\}$ is a solution of $(G,X,k)$ (since $v \in B \subseteq X$)
and $B \subseteq S'$.
\end{enumerate}
Therefore, these rules are also correct for the restricted CPP problem.

\citet{Tsur_2bd} showed that when no rule of the algorithm can be applied,
each connected component of the graph is an induced path, an induced cycle,
the tree obtained from an edge by adding two degree-1 neighbors to each endpoint,
or the tree obtained from a three-vertex path by adding two degree-1 neighbors to each endpoint
and one degree-1 neighbor to the middle vertex.
For each component, compute the minimum number of allowed deletions, handling paths and cycles
directly and enumerating all allowed subsets for either type of tree.
Return ``yes'' if every component has a solution and the sum of these minima is at most $k$,
and ``no'' otherwise.
Therefore, such instances of restricted CPP can be solved in polynomial time.
The modifications do not increase the branching numbers of the original rules,
so the running time remains $O^*(3^k)$.
\end{proof}

From Lemmas~\ref{lem:reduction-pove-cpp} and~\ref{lem:restricted-cpp-alg}, we obtain the following theorem.

\begin{theorem}
There is a deterministic $O^*(3^k)$-time algorithm for restricted POVE.
\end{theorem}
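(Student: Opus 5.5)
The theorem follows by composing the restricted reduction of Lemma~\ref{lem:reduction-pove-cpp} with the algorithm of Lemma~\ref{lem:restricted-cpp-alg}.

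Let $(G,X,k)$ be an instance of restricted POVE. First we apply the restricted reduction of Lemma~\ref{lem:reduction-pove-cpp}. This takes polynomial time and produces the instance $(G[\I{G}], X \cap \I{G}, k)$ of restricted CPP. The parameter is unchanged, so $k' = k$. We then run the $O^*(3^k)$-time algorithm of Lemma~\ref{lem:restricted-cpp-alg} on this instance and return its answer. Since the reduction is parameter-preserving, the output instance has parameter at most $k$, and the total running time is polynomial plus $O^*(3^{k})$, i.e., $O^*(3^k)$.

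The only thing to check is that the restricted version of the reduction preserves answers. The proof of Lemma~\ref{lem:reduction-pove-cpp} treats the unrestricted case in detail and states the restricted map briefly, so we verify both directions with the restriction in place.

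\textbf{Forward direction.} Let $S \subseteq X$ be a POVE solution. Removing the vertices of degree at most one from $S$ does not change $\explode{G}{S}$, because exploding a vertex of degree at most one leaves the graph unchanged up to isomorphism. After this removal we have $S \subseteq X \cap \I{G}$. The unrestricted argument then shows that $S$ is a solution of the CPP instance, and it now respects the restriction $X \cap \I{G}$.

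\textbf{Backward direction.} Let $S \subseteq X \cap \I{G}$ be a CPP solution. Then $S \subseteq X$, and the unrestricted argument shows that $\explode{G}{S}$ is a caterpillar forest. Hence $S$ is a solution of the restricted POVE instance.

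With both directions verified, the theorem follows. The main obstacle is the algorithm for restricted CPP itself, which Lemma~\ref{lem:restricted-cpp-alg} already provides.
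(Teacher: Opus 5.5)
Your proposal is correct and is exactly the paper's argument: the theorem is obtained by composing the restricted reduction of Lemma~\ref{lem:reduction-pove-cpp} with the $O^*(3^k)$ algorithm of Lemma~\ref{lem:restricted-cpp-alg}, and your explicit check that the restriction $X \cap \I{G}$ is respected in both directions spells out what the paper states in one line. One small inaccuracy: exploding a degree-0 vertex deletes it, so the graph is not left unchanged up to isomorphism, but this is harmless because removing an isolated vertex does not affect being a caterpillar forest.
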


\section*{Declaration of generative AI use}
During the preparation of this work, some results were obtained with the assistance of
ChatGPT (OpenAI).
The author independently verified each of these results.
The author wrote the entire first draft of the manuscript and subsequently used ChatGPT
to assist with revisions.
The author checked every edit made by ChatGPT for correctness and
takes full responsibility for the final manuscript.

\end{document}